\newcommand{\tmop}[1]{\ensuremath{\operatorname{#1}}}
\newcommand{\ket}[1]{\mbox{$| #1 \rangle$}}
\newcommand{\bra}[1]{\mbox{$\langle #1 |$}}
\newcommand{\proj}[1]{\mbox{$| #1 \rangle\!\langle #1 |$}}
\def\H{\mathcal{H}}
\def\s{\mathcal{S}^+}
\def\t{^{\mbox{\tiny T}}}
\def\c{\mathcal{C}^{\mbox{\tiny CHSH}}_{12}}
\def\nv{\bf{0}}
\def\tr{\mbox{tr}}
\def\o{\!\otimes\!}
\def\id{\mathbbm{1}}
\newcommand{\C}{{\mathbb{C}}}
\def\vs{\vspace{.2cm}}
\newtheorem{theorem}{Theorem}
\newtheorem{lemma}[theorem]{Lemma}
\newtheorem{definition}[theorem]{Definition}
\begin{document}

\title{All entangled states display some hidden nonlocality}

\author{Yeong-Cherng~Liang} 
\affiliation{Group of Applied Physics, University of Geneva, CH-1211 Geneva 4, Switzerland}

\author{Llu\'{\i}s~Masanes}
\affiliation{ICFO-Institut de Ci\`encies Fot\`oniques, Av. Carl Friedrich Gauss 3, E-08860 Castelldefels (Barcelona), Spain}

\author{Denis Rosset} 
\affiliation{Group of Applied Physics, University of Geneva, CH-1211 Geneva 4, Switzerland}

\date{\today}
\pacs{03.65.Ud, 03.67.Mn}

\begin{abstract}
A well-known manifestation of quantum entanglement is that it may lead to correlations 
that are inexplicable within the framework of a locally causal theory --- a fact that is demonstrated by
the quantum violation of Bell inequalities. The precise relationship between quantum entanglement and the violation of Bell inequalities is, however, not well understood. While it is known that entanglement is necessary for such a violation, it is not clear whether all entangled states violate a Bell inequality, even in the scenario where one allows joint operations on multiple copies of the state and local filtering operations before the Bell experiment. In this paper
we show that {\em all} entangled states, namely, all non-fully-separable states of arbitrary Hilbert space dimension and arbitrary number of parties, violate a Bell inequality when combined with another state which on its own cannot violate the same Bell inequality. This result shows that quantum entanglement and quantum nonlocality are in some sense equivalent, thus giving an affirmative answer to the aforementioned open question. It follows from 
our result that two entangled states that are apparently useless in demonstrating quantum nonlocality via 
a specific Bell inequality can be combined to give a Bell violation of the same inequality. 
Explicit examples of such activation phenomenon are provided.
\end{abstract}

\maketitle


\section{Introduction}

Quantum nonlocality, i.e., the violation of Bell inequalities~\cite{Bell:1964} by entangled quantum states, is one of the most astonishing features predicted by quantum theory. Since Bell inequalities are constraints on measurement statistics that follow directly from the intuitive notion of local causality~\cite{LC}, its experimental violation under strict locality condition~\cite{Exp} suggests that an intuitive, causal explanation of certain quantum phenomena may be out of reach (see Ref.~\cite{HiddenInfluence} for an analogous conclusion even if quantum theory is not entirely correct).

Although entanglement is necessary~\cite{Werner:1989} for the demonstration of quantum nonlocality, and  all pure entangled states violate some Bell inequalities~\cite{N.Gisin:PLA:1992,S.Popescu:PLA:1992,Yu:2012} (see also Ref.~\cite{Cavalcanti:2011}),  some {\em mixed} entangled states can provably satisfy all Bell inequalities when measured one copy at a time in {\em any} Bell experiment~\cite{Werner:1989,Barrett:2002,Toth:2006,Almeida:2007}. For conciseness, we will say that these states are {\em 1-local} in the rest of this paper.  Interestingly, some of these 1-local states can become Bell-inequality-violating {\em if}, prior to the Bell experiment, an appropriate local (filtering) operation is successfully applied to the individual subsystems~\cite{SP,Gisin:1996}. This phenomenon was termed {\em hidden nonlocality}~\cite{SP} and has since been demonstrated in photonic experiments~\cite{Kwiat}.

Inspired by the idea of entanglement distillation~\cite{distillation}, a more general scheme of demonstrating hidden nonlocality was also proposed by Peres~\cite{Peres:1996} whereby the local filtering operation is allowed to act on multiple copies of identical quantum states. By this means, he showed that even some very noisy singlet state --- not known to exhibit nonlocal behavior at that time --- can indeed display hidden nonlocality. Interestingly,  it is also possible  to demonstrate the nonlocal behavior of some 1-local quantum states via joint local measurements on multiple copies of the same state without local filtering operation. This possibility was first raised as an open problem in Ref.~\cite{Liang:PRA:2006} and such examples have since been found in Refs.~\cite{Palazuelos:1205.3118, Cavalcanti:1207.5485} (see also Ref.~\cite{NV}). 

Despite all this progress, it remains unclear whether  {\em all} entangled states can exhibit non-locally-causal (henceforth abbreviated as nonlocal) correlations in the standard scenario where the experimenters can choose freely among a number of alternative measurement settings (see, however, Ref. \cite{Buscemi:PRL:2012} for a variant of this standard scenario). For example, even if we allow local filtering operations on arbitrarily many copies of the same quantum state, it was shown in Ref.~\cite{Masymp} that the set of bipartite quantum states that can violate the Clauser-Horne-Shimony-Holt (CHSH) Bell inequality~\cite{CHSH} is precisely the set of distillable~\cite{distillation} quantum states. While this does not imply that bound entangled~\cite{BoundEntangled} states must satisfy all Bell inequalities (see, eg., Ref.~\cite{Vertesi:PRL:030403}), it clearly suggests that in order to identify the nonlocal behavior of a quantum state, more general protocols are worth considering.

In Ref.~\cite{MLD}, one such possibility to demonstrate the nonlocal behavior of all bipartite entangled states was proposed: instead of local filtering operations on many copies of the same quantum state, one considers local filtering operations that acts jointly on a quantum state $\tau$  and an auxiliary state $\rho$ (which by itself does not violate some Bell inequality under consideration) prior to the Bell experiment. Within this scheme, it was shown~\cite{MLD} that for any bipartite entangled state $\tau$, there exists another state $\rho$ (which by itself does not violate the CHSH inequality even after arbitrary local filtering operations) such that $\tau\otimes\rho$ does violate the CHSH inequality after appropriate local preprocessing. Since the auxiliary state $\rho$, by construction, does not violate the CHSH inequality even when supplemented with an arbitrary amount of classical correlations, the results of Ref.~\cite{MLD} imply that for every bipartite entangled state $\tau$, there is a scenario where $\tau$ cannot be substituted by classical correlations without changing the statistics. This shows that all bipartite entangled states can exhibit some nonlocal correlations, and hence display some hidden nonlocality.

In this paper, we generalize the results of Ref.~\cite{MLD} to states involving an arbitrary number of parties, thereby showing that all non-fully-separable states are capable of exhibiting some nonlocal behavior. In Sec.~\ref{Sec:Protocol}, we describe our protocol that serves this purpose using the CHSH inequality. Then, in Sec.~\ref{Sec:Examples}, we provide some explicit examples of quantum states displaying such hidden nonlocality. Finally, we conclude with some further discussion in Sec.~\ref{Sec:Discussion}.

\section{Nonlocal behavior from all multipartite entangled states}
\label{Sec:Protocol}

Our main goal in this section is to show that all multipartite entangled states are capable of exhibiting some nonlocal behavior. To be more precise, an $n$-partite state $\rho$ is said to be entangled if it is not fully separable, i.e.,
\begin{equation}\label{fully-sep}
    \rho \neq \sum_i p_i\, \rho_1^i\otimes\cdots\otimes\rho_n^i,
\end{equation}
for any normalized, non-negative weights $p_i$ and any density matrix $\rho^i_k$ acting on the $k$-th Hilbert space $\H_k$. 

To manifest the nonlocal behavior of all multipartite entangled states, we consider  an adaptation of the standard CHSH-Bell-test to the multipartite scenario, namely, we shall allow all parties to perform stochastic local operations (SLOs) prior to the Bell test, and {\em only if} all these local operations are successful is a test of the CHSH inequality carried out between the 1st and the 2nd party (although it could be any other pair), see Figure~\ref{Fig:Scenario}. 
For the benefit of subsequent discussion, we remind that SLOs --- also known by the name of local filtering operations --- are represented, up to normalization, by separable maps~\cite{SLO}
\begin{equation}
\label{sepmap}
  \Omega(\rho) = \sum_i \left( F_1^i \otimes \cdots \otimes F_n^i \right) \rho \left( F_1^i \otimes \cdots \otimes F_n^i \right)^\dagger\ ,
\end{equation}
where ${\bf F}^i =  F_1^i \otimes \cdots \otimes F_n^i$ are the Kraus operators, and each $F_k^i$ is a matrix that acts on the $k$-th Hilbert space $\H_k$.

\begin{figure}[h!]
\scalebox{1}{\includegraphics{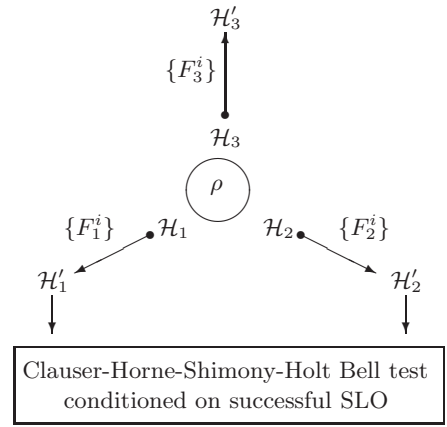}}
\caption{\label{Fig:Scenario} Schematic  illustrating the generalized CHSH Bell test in the tripartite scenario. The tripartite state $\rho$ acting on $\H_1\otimes\H_2\otimes\H_3$ first undergoes stochastic local operations (SLO) --- as described by the Kraus operators $\{F_k^i\}$ --- which can succeed with some non-zero probability. A standard CHSH Bell test is carried out on the resulting state acting on $\H_1'\otimes\H_2'$ only if the corresponding SLO is successful.}
\end{figure}

For completeness, we also remind that a CHSH-Bell test is one whereby both the experimenters have a choice over two alternative measurements (labeled by $x=0,1$ for the first party and $y=0,1$ for the second) and where each measurement outcome (labeled by $a$ for the first party and $b$ for the second) takes a binary value. In these notations, the CHSH inequality~\cite{CHSH}, which is a constraint that has to be satisfied by all locally causal correlations, reads as
\begin{equation}\label{Eq:CHSH}
	E_{00}+E_{01}+E_{10}-E_{11}\le 2, 
\end{equation}
where the correlators
\begin{equation}
	E_{xy}\equiv P(a=b|x,y)-P(a\neq b|x,y)
\end{equation}
are defined in terms of the difference between two joint conditional probabilities of the measurement outcomes. It is worth noting that other versions of the CHSH inequality can be obtained by appropriate relabeling of measurement settings and/or outcomes. Together, they form the necessary and sufficient conditions~\cite{A.Fine:PRL:1982} for the measurement statistics in this Bell scenario to admit a locally causal description.

With all the ingredients introduced above, we are now ready to define the set of $n$-partite quantum states that {\em do not} exhibit, by themselves, nonlocal behavior in the generalized CHSH-Bell test. 

\begin{definition}[CHSH-preprocessed-1-local] 
An $n$-partite state is said to be CHSH-preprocessed-1-local between parties 1 and 2 if it does not violate the CHSH-inequality between parties 1 and 2, even after $n$-partite stochastic local operations without communication. The set of $n$-partite states that are CHSH-preprocessed-1-local between party 1 and 2 is denoted by $\c$.
\end{definition}

Evidently, the set $\c$ contains all $n$-partite states that are separable between the Hilbert space of the first party $\H_1$ and the second party $\H_2$ even after arbitrary SLO and tracing out the remaining $n-2$ parties.  Moreover, it also follows from result 4 of Ref.~\cite{Masymp} that $\c$ contains all states that are bound entangled~\cite{BoundEntangled} between $\H_1$ and $\H_2$ even after arbitrary SLO and tracing out the remaining $n-2$ parties. For the $n=2$ case, the set of two-qubit states that are in $\c$ has been characterized in Ref.~\cite{VW}. For general $n$ and Hilbert space dimensions, the set of states that are CHSH-preprocessed-1-local is characterized implicitly via the following Lemma.

\begin{lemma}
\label{Lem:C}
An $n$-partite state $\rho$ acting on
$\bigotimes_{i=1}^n \H_{i}$ belongs to $\c$ if, and only if, it satisfies
\begin{equation}\label{C}
    \tr\!\left[\rho \left( \mbox{$\bigotimes_{i=1}^n$}  F_i \right) \left(H_\theta\otimes\id_{\C^2}^{n-2}\right)\,
    \left( \mbox{$\bigotimes_{i=1}^n$}  F_i \right)^\dag \right]
    \geq 0\ ,
\end{equation}
for all matrices $F_i\! :\C^2\rightarrow\H_{i}$ and all numbers $\theta \in [0,\pi/4]$, where
\begin{equation} \label{H}
    H_\theta\equiv \id_{\C^2}\otimes\id_{\C^2}-\cos\theta\,
    \sigma_x\otimes\sigma_x - \sin\theta\,
    \sigma_z\otimes \sigma_z\ ,
\end{equation}
with $\id_{\C^2}$ being the $2\times 2$ identity matrix and where 
$\{\sigma_i\}_{i=x,y,z}$ are the Pauli matrices.
\end{lemma}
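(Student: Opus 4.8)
The plan is to characterize membership in $\c$ directly from the definition, turning the geometric condition ``does not violate CHSH after any SLO'' into the single operator inequality~\eqref{C}. The key reduction is that among all CHSH-Bell tests performed on a two-qubit state, the maximal violation is governed by a simple quantity, and that after any SLO the relevant post-measurement system can be taken to be two qubits. So I would proceed in two directions, establishing the ``only if'' and the ``if'' separately.

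First I would set up the reduction to qubits. After applying an SLO with Kraus operators ${\bf F}^i = F_1^i\o\cdots\o F_n^i$ and conditioning on success, a CHSH test is carried out only between parties $1$ and $2$, so parties $3,\dots,n$ may as well be traced out after the filtering. The CHSH operator for dichotomic observables $A_x = \hat{a}_x\cdot\vec\sigma$ and $B_y=\hat{b}_y\cdot\vec\sigma$ acting on the filtered state can, by the Fine conditions and by choosing the measurement directions optimally, be brought into the canonical form built from $\sigma_x\o\sigma_x$ and $\sigma_z\o\sigma_z$; this is precisely why $H_\theta$ in~\eqref{H} carries a single angle $\theta\in[0,\pi/4]$ rather than two independent observables per party. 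The nonviolation of CHSH for the filtered state then reads $\tr[\,\Omega(\rho)\,(H_\theta\o\id)\,]\ge 0$ for all admissible $\theta$, where the identity pads out the traced (or untested) qubit subsystems. Expanding $\Omega$ via~\eqref{sepmap} and using linearity of the trace, the condition decouples across the Kraus terms, and since the filters $F_k^i$ onto $\C^2$ range over arbitrary matrices $F_i:\C^2\to\H_i$, the requirement becomes exactly~\eqref{C} ranging over all such $F_i$ and all $\theta$. This gives the ``only if'' direction.

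For the ``if'' direction I would argue the contrapositive: suppose $\rho\notin\c$, i.e.\ there exists some SLO after which the CHSH inequality between parties $1$ and $2$ is violated. Unpacking the optimal CHSH violation for the resulting (generally higher-dimensional) bipartite-in-$(1,2)$ state via the Horodecki-type criterion, I would show the violating configuration can always be squeezed onto a two-qubit subspace selected by local filters $F_1,F_2$ on parties $1,2$, with the remaining parties' filters $F_3,\dots,F_n$ chosen to post-select an appropriate branch. Feeding this violating configuration into the left-hand side of~\eqref{C} produces a negative value for some choice of $F_i$ and $\theta$, contradicting~\eqref{C}. Thus~\eqref{C} holding for all $F_i,\theta$ forces $\rho\in\c$.

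The main obstacle I expect is the reduction to the canonical two-qubit form with a single angle $\theta$. One must justify that (i) optimizing over all CHSH variants and all local measurement directions on an arbitrary filtered state is equivalent to optimizing the fixed operator family $H_\theta$ over the filters themselves, absorbing the measurement choice into the $F_i$, and (ii) the range $\theta\in[0,\pi/4]$ genuinely captures the worst case, with the endpoints and the interior together covering all tight CHSH facets. Establishing that no violation is lost by restricting the tested subsystems of parties $1$ and $2$ to qubits --- rather than to the full $\H_1\o\H_2$ --- is the technical crux, and I would handle it by noting that the optimal CHSH observables have at most two distinct eigenvalues and so effectively act on a two-dimensional block carved out by the filters. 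Once this qubit reduction and the $H_\theta$ parametrization are in place, the equivalence with~\eqref{C} follows by the linearity argument sketched above.
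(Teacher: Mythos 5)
Your overall strategy is the same as the paper's: the direction in which a violation of \eqref{C} implies $\rho\notin\c$ is immediate, because each choice of $\{F_i\}$ is itself a single--Kraus SLO and $H_\theta$ encodes a genuine CHSH test; the substantive direction requires compressing an arbitrary SLO--induced violation down to a single product Kraus operator with qubit outputs on parties $1$ and $2$, together with the canonical one--parameter family $H_\theta$, $\theta\in[0,\pi/4]$. The paper does exactly this by quoting two external results: the two--qubit criterion of Ref.~\cite{MLD} (a two--qubit state violates CHSH iff $\tr[\varrho\,(U\otimes V)^\dag H_\theta (U\otimes V)]<0$ for some $U,V,\theta$, with $U,V$ then absorbed into $F_1,F_2$), and, crucially, the result of Ref.~\cite{Masymp} that any bipartite state violating CHSH can be transformed by stochastic local operations into a \emph{two-qubit} state that still violates CHSH.

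The genuine gap is in your handling of that second ingredient, which you correctly identify as the crux but then dispatch with the claim that the optimal CHSH observables ``have at most two distinct eigenvalues and so effectively act on a two-dimensional block carved out by the filters.'' This does not work as stated: two non-commuting $\pm1$-valued observables $A_0,A_1$ on a high-dimensional $\H_1$ have, in general, no common two-dimensional invariant block supporting the relevant part of the state. What is true (Jordan's two-projections lemma) is that $A_0,A_1$ can be \emph{simultaneously} block-diagonalized into a direct sum of many $2\times 2$ and $1\times1$ blocks; the state's weight is then spread over all pairs consisting of one block for party $1$ and one for party $2$, and one needs a convexity argument --- the CHSH expectation is a weighted average of the CHSH expectations on the individual block pairs, so at least one pair must still violate --- to extract local projectors $F_1,F_2$ onto a single violating $\C^2\otimes\C^2$ subspace. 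That block-selection step is precisely the content of the result of Ref.~\cite{Masymp} that the paper invokes; without it, or an equivalent argument, your step ``squeeze the violating configuration onto a two-qubit subspace'' is unsupported. A second, smaller omission: when $\rho\notin\c$ the violating separable map may have many Kraus terms, and in the contrapositive direction you need the linearity observation that if $\sum_i\tr[{\bf F}^i\rho\,{\bf F}^{i\dag}(H_\theta\otimes\id_{\C^2}^{n-2})]<0$ then some single term is already negative; you mention a ``decoupling across Kraus terms'' only in the easy direction, where it is not actually needed.
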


The proof of this Lemma is a straightforward extension of that for Lemma 1 in Ref.~\cite{MLD} but for completeness, we have included it in Appendix~\ref{App:Lemma2}. While the characterization of $\c$ is interesting in its own, here, we are mainly interested in using it to show the central result of this paper, as summarized in the following Theorem.

\begin{theorem}\label{Thm} 
An $n$-partite state $\tau$ is entangled if and only if there exists a state $\rho \in \c$ such that $\rho\otimes\tau$ is not in $\c$.\vs
\end{theorem}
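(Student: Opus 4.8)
The statement is an equivalence, so I would prove the two implications separately, treating the ``only if'' direction [$\tau$ entangled $\Rightarrow$ existence of an activating $\rho$] as the substantive half. For the ``if'' direction I would establish the contrapositive: if $\tau$ is fully separable, write $\tau = \sum_k p_k\, \tau_1^k \otimes\cdots\otimes\tau_n^k$ and argue that for any $\rho\in\c$ the state $\rho\otimes\tau$ is again in $\c$. The key observation is that the classical index $k$ is mere shared randomness while each local factor $\tau_j^k$ is a fixed local ancilla; hence any SLO on $\rho\otimes\tau$ followed by a CHSH test between parties $1$ and $2$ can be reproduced by a convex mixture (over $k$, weighted by the success probabilities) of SLOs acting on $\rho$ alone, each obtained by folding the fixed factors $\tau_j^k$ into the Kraus operators $F_j^i$. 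Since $\rho\in\c$, every term in the mixture yields locally-causal CHSH statistics, and a convex combination of locally-causal correlations is locally causal; therefore $\rho\otimes\tau\in\c$.

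For the ``only if'' direction, given a non-fully-separable $\tau$, I would exploit the existence of a fully-separability witness: a standard hyperplane-separation argument over the closed convex set of fully separable states yields a Hermitian $W$ with $\tr(W\tau)<0$ while $\langle\varphi_1|\otimes\cdots\otimes\langle\varphi_n|\,W\,|\varphi_1\rangle\otimes\cdots\otimes|\varphi_n\rangle \ge 0$ for every product vector. The plan is to build the auxiliary state $\rho$ directly out of $W$, generalizing the bipartite construction of Ref.~\cite{MLD}: $\rho$ should be engineered so that (i) the product-positivity of $W$ forces $\rho$ to satisfy the Lemma~\ref{Lem:C} inequality for all filters $F_i$ and all $\theta\in[0,\pi/4]$, whence $\rho\in\c$; while (ii) tensoring with $\tau$ and filtering isolates the single negative number $\tr(W\tau)$ and converts it into a violation of the Lemma~\ref{Lem:C} inequality for $\rho\otimes\tau$.

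Concretely, to certify $\rho\otimes\tau\notin\c$ via Lemma~\ref{Lem:C} I would exhibit explicit filters $F_i:\C^2\rightarrow\H_i^\rho\otimes\H_i^\tau$ and an angle $\theta\in[0,\pi/4]$ that make the trace in Eq.~(\ref{C}) negative. The mechanism that makes this possible even when $\tau$ carries no entanglement across the $1|2$ cut is entanglement swapping: the filters on the spectator parties $3,\dots,n$ perform teleportation-like joint operations on their $\rho$- and $\tau$-subsystems that route the entanglement detected by $W$ onto parties $1$ and $2$, leaving them, conditioned on success, with a two-qubit state whose overlap with a maximally entangled state is large enough for $H_\theta$ to take a negative expectation. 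Dually, the absence of any such route for $\rho$ by itself --- there being no correlation in $\rho$ alone that the filters can convert into $1$--$2$ entanglement --- is precisely the product-positivity of $W$ re-expressed through the characterization of $\c$.

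I expect the main obstacle to be step (i), the verification that $\rho\in\c$. Unlike membership \emph{failure}, which needs only a single witnessing choice of $(\{F_i\},\theta)$, establishing $\rho\in\c$ requires the Lemma~\ref{Lem:C} inequality to hold for \emph{all} matrices $F_i$ and all $\theta\in[0,\pi/4]$ simultaneously, and the multipartite setting complicates this because the filters on the spectator parties interact with the extra $\id_{\C^2}^{n-2}$ factor appearing in $H_\theta\otimes\id_{\C^2}^{n-2}$. The crux is to show that these spectator filters cannot be used to defeat the product-positivity of $W$---that is, that the worst case over all filters still reduces to evaluating $W$ on product vectors---after which the inequality follows directly from the defining property of the witness. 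The bipartite argument of Ref.~\cite{MLD} should supply the template, and the remaining work is to check that introducing the additional tensor factors, together with the trivial action of $H_\theta$ on parties $3,\dots,n$, creates no new violating configuration.
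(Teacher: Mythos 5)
Your ``if'' direction is correct and is exactly the paper's argument: absorbing the fixed local factors $\tau_j^k$ of a fully separable $\tau$ into the Kraus operators shows that any SLO on $\rho\otimes\tau$ is reproduced by an SLO on $\rho$ alone, so $\rho\in\c$ implies $\rho\otimes\tau\in\c$. Your choice of filters for the converse is also the right one --- teleporting $\tau$ into the primed registers of $\rho$ so that the Lemma~\ref{Lem:C} quantity collapses to $\tr\!\left[\rho\,(\tau\t\otimes H_{\pi/4}\otimes\id_{\C^2}^{n-2})\right]$.

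The gap is in the step you defer: producing a $\rho$ that simultaneously lies in $\c$ and makes that trace negative is the entire content of the theorem, and the route you propose is not known to work. Building $\rho$ ``directly out of'' an arbitrary full-separability witness $W$ presupposes that every such witness can be realized by some state in $\c$; nothing guarantees this, and the product-positivity of $W$ is far weaker than the universally quantified inequality \eqref{C} defining $\c$ (indeed $W$ is generally not even positive semidefinite, so it is not a state to begin with). You correctly flag verifying $\rho\in\c$ as the main obstacle, but offer no mechanism for overcoming it. The paper sidesteps the construction entirely by a duality argument: it treats $\c$ as a convex cone, assumes $\tr\!\left[\rho\,(\tau\t\otimes H_{\pi/4}\otimes\id_{\C^2}^{n-2})\right]\ge 0$ for \emph{all} $\rho\in\c$, invokes Farkas' lemma to write $\tau\t\otimes H_{\pi/4}\otimes\id_{\C^2}^{n-2}$ as a positive operator plus an integral of filtered $H_{\theta_x}$'s, and then proves (Lemma~\ref{Lem:SeparableMap}) that any such decomposition forces $\tau$ to be fully separable --- a contradiction. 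That lemma, which analyzes the Bell-twirled Choi matrices $\omega_x^{rs}$ of the separable maps, characterizes the cone $\mathcal{M}$ of their $4\times4$ projections, pins down $\theta_x=\pi/4$, and exhibits $\mu\t$ as the output of a separable map applied to the separable state $\Pi_2+\Pi_3$, is the mathematical heart of the proof; your proposal contains no substitute for it. Pointing to Ref.~\cite{MLD} as a template does not close the gap, because the bipartite argument there is precisely this non-constructive duality argument rather than a direct witness-to-state construction.
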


\begin{proof}[Proof of Theorem~\ref{Thm}]
If $\tau$ is {\em fully separable} then, $\rho\in\c$ implies $\rho\otimes\tau \in \c$. This is so because for any separable map $\Omega$ transforming $\rho \otimes \tau$ there is another separable map $\Omega\rq{}$ acting on $\rho$ such that $\Omega(\rho \otimes \tau) = \Omega\rq{} (\rho)$. Let us now prove the other direction of the theorem.

From now on $\tau$ is an arbitrary {\em entangled} state acting on
$\H_\tau=\bigotimes_{i=1}^n \H_{i}$. Let us show that there always exists an ancillary
state $\rho\in\c$ such that $\rho\otimes\tau \not\in \c$. Let us consider $\rho$
that acts on the $n$-partite Hilbert space
\begin{equation}\label{Eq:StateSpace}
	\H_\rho=\bigotimes_{i=1}^n \left[\H_{i}'\otimes\H_{i}''\right] 
\end{equation}
where $\H_{i}'=\H_i$, and $\H_{i}'' =\C^2$ for all $i$. 

Our aim is to prove that the state $\rho\otimes\tau$ violates
(\ref{C}) for some choice of $ F_i$ and $\theta$. In particular,
let us consider the local filtering operation described by 
\begin{gather}\label{Eq:F_i}
  F_i=\tilde{ F}_i = \ket{\Phi_{\H_{i} \H_{i}'}}\otimes \id_{\H_i''},
\end{gather}
where $\ket{\Phi_{\H_{i} \H_{i}'}} = \sum_{s=1} ^{{\rm dim} \H_{i}} ({\rm dim} \H_{i})^{-1/2} |s, s\rangle$ is the maximally-entangled state between
the spaces $\H_i$ and $\H_{i}'$ (which have the same dimension),
and $\id_{\H_i''}$ is the identity matrix acting on $\C^2$, see Figure~\ref{Fig:Protocol}. A little calculation shows that for any $\rho$,~\footnote{\label{fn:dim}In order to arrive at a simple expression in Eq.~\eqref{eq}, we have set $\H_i''=\C^2$ also for all $i=3,\ldots,n$ in Eq.~\eqref{Eq:StateSpace}. However, it should be clear from the description of the generalized CHSH Bell test (Figure~\ref{Fig:Scenario}) that it suffices to consider ancillary state acting on $\H_\rho=\bigotimes_{i=1}^2 \left[\H_{i}'\otimes\H_{i}''\right]\bigotimes_{i=3}^n \left[\H_{i}'\right]$ and local filtering operations of the form $F_i= \ket{\Phi_{\H_{i} \H_{i}'}}\otimes \id_{\H_i''}$ for $i=1,2$ and $F_i= \ket{\Phi_{\H_{i} \H_{i}'}}$ for all $i=3,\ldots,n$.  }
\begin{align}
    &\tr\!\left[\left(\rho\otimes\tau\right)\,
    \left( \mbox{$\bigotimes_{i=1}^n$} \tilde{F}_i \right) \left(H_{\pi/4}\otimes\id_{\C^2}^{n-2}\right)
    \left( \mbox{$\bigotimes_{i=1}^n$} \tilde{F}_i \right) ^\dag\right]\nonumber\\
    = &\,\nu\, \tr\!\left[\rho\, (\tau\t\otimes H_{\pi/4}\otimes\id_{\C^2}^{n-2})\right]\label{eq}
\end{align}
where $\nu$ is a positive constant, $\tau\t$ stands for the
transpose of $\tau$ and $\id_{\C^2}^{n-2}=\bigotimes_{i=1}^{n-2} \id_{\C^2}$ . From Lemma~\ref{Lem:C}, we see that $\rho\otimes\tau\not\in\c$ if for $\theta=\pi/4$, and $F_i$ defined in Eq.~\eqref{Eq:F_i}, we have
\begin{equation}\label{Eq:cond}
    \tr\!\left[\rho\, (\tau\t\otimes H_{\pi/4}\otimes\id_{\C^2}^{n-2})\right] <0
\end{equation}
For convenience, in the rest of the proof we allow $\rho$ to be
unnormalized. The only constraints on the matrices $\rho \in \c$ are:
(i) positive semi-definiteness ($\rho \in \s$), and (ii) satisfiability of all
the inequalities (\ref{C}) in Lemma~\ref{Lem:C}. $\c$ is now a convex cone, and
its dual cone is defined as
\begin{equation}\label{dual}
    \c\,^*=\{X : \tr[\rho\, X]\geq 0,\
     \forall\, \rho\in\c\}\ ,
\end{equation}
where $X$ are Hermitian matrices. Farkas' Lemma \cite{farkas} implies
that all matrices in $\c\,^*$ can be written as non-negative linear
combinations of matrices $P \in \s$ and matrices of the form
$\left( \bigotimes_{i=1}^n  F_i \right) \left(H_\theta\otimes\id_{\C^2}^{n-2}\right) \left( \bigotimes_{i=1}^n   F_i \right) ^\dag$ where $ F_i\! :\C^2\rightarrow\H_{i}'\otimes \H_{i}''$.

\begin{figure}[h!]
\scalebox{1}{\includegraphics{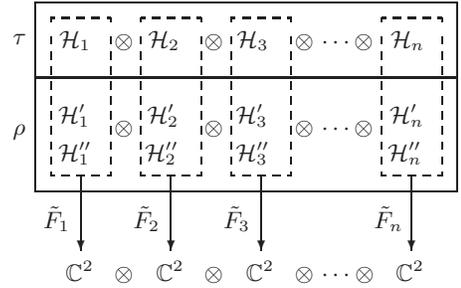}}
\caption{\label{Fig:Protocol} Schematic illustrating the local filtering operations $\tilde{ F_i}$  involved in our protocol. The solid box on top is a schematic representation of the state $\tau$ whereas that on the bottom is for the ancillary state $\rho$. Each dashed boxes enclose the subsystems possessed by the respective experimenters. For each party, the local filtering operation $\tilde{F_i}$ consists of projecting onto the maximally entangled state between the local Hilbert space $\H_i$ and $\H_i'$, while leaving the Hilbert space $\H_i''$ untouched.}
\end{figure}

We now show that there always exists $\rho\in\c$ satisfying
(\ref{Eq:cond}) by supposing otherwise and arriving at a contradiction.
Suppose that for all $\rho \in \c$ the inequality 
\begin{equation}\label{cond2}
    \tr\!\left[\rho\, (\tau\t\otimes H_{\pi/4}\otimes\id_{\C^2}^{n-2})\right] \ge0
\end{equation}
holds, and thus the matrix
$ \tau\t\otimes H_{\pi/4}\otimes\id_{\C^2}^{n-2}$ belongs to $\c\,^*$, cf. Eq.~\eqref{dual}.  Applying Farkas'
Lemma \cite{farkas} we can now write
\begin{align}\nonumber
	&\tau\t\otimes H_{\pi/4}\otimes\id_{\C^2}^{n-2}=
	\int\!\! dy\, P_y +
\\ \label{pl}
	&\int\!\! dx \left( \mbox{$\bigotimes_{i=1}^n$} F_{i,x} \right)
	\left(H_{\theta_x}\otimes\id_{\C^2}^{n-2}\right)\,
	\left( \mbox{$\bigotimes_{i=1}^n$} F_{i,x} \right)^\dag
    \ ,
\end{align}
which is equivalent to
\begin{equation}\label{principal0}
    \tau\t\! \otimes H_{\pi/4}\otimes\id_{\C^2}^{n-2} - \int\!\! dx
    \ \Omega_x\!\left( H_{\theta_x}\otimes\id_{\C^2}^{n-2} \right) \geq 0\ ,
\end{equation}
where each $\Omega_x$ is a separable map. We prove in
Lemma~\ref{Lem:SeparableMap} that inequality \eqref{principal0} requires  $\tau$ to be separable,
which gives the desired contradiction. This concludes the proof of the theorem.
\end{proof}

Theorem~\ref{Thm} is a  generalization of the theorem given in Ref.~\cite{MLD} for bipartite entangled states to any multipartite non-fully-separable state. As a result, it also inherits the dramatic consequences of the results presented therein. For instance, for any multipartite state $\tau$ that is non fully separable and which is CHSH-preprocessed-1-local, i.e., $\tau\in\c$, Theorem~\ref{Thm} implies that one can find another state $\rho\in\c$ such that $\rho\otimes\tau\not\in\c$, that is, both $\rho$ and $\tau$ display hidden-CHSH-nonlocality. This represents an example of what is now commonly referred to as the (super)activation of nonlocality~\cite{NV,Cavalcanti:2011,Palazuelos:1205.3118,Cavalcanti:1207.5485}, in the sense that by combining resources that apparently can only exhibit local behavior individually, one obtains a resource that is also capable of producing nonlocal correlations. 

There are, however, also interesting implications of Theorem~\ref{Thm} that are manifested only in the multipartite scenario. For instance, let us consider the three-qubit state $\sigma_{\mbox{\scriptsize shifts}}$ presented in Refs.~\cite{shifts,ShiftState}. This state is not fully separable (\ref{fully-sep}), but when  any two of the three parties are considered as a single one, the resulting bipartite state becomes separable, i.e., $\sigma_{\mbox{\scriptsize shifts}}$ is biseparable with respect to all bipartitions. Imagine now a 5-partite scenario where parties $\{1, 2\}$ want to violate the CHSH inequality, and parties $\{3,4,5\}$ share the supposedly useless state $\sigma_{\mbox{\scriptsize shifts} \{3,4,5\}}$. Theorem~\ref{Thm} guarantees the existence of a 5-partite state $\rho_{\{1,\ldots 5\}}$ which does not belong to $\c$, yet, together with
$\sigma_{\mbox{\scriptsize shifts} \{3,4,5\}}$ violate the CHSH-inequality between parties 1 and 2. Notice that $\sigma_{\mbox{\scriptsize shifts} \{3,4,5\}}$ does not
even involve parties $\{1, 2\}$! Therefore, in the joint ``activation" of $\rho_{\{1,\ldots 5\}}$ and $\sigma_{\mbox{\scriptsize shifts} \{3,4,5\}}$, some intricate teleportation-like~\cite{Bennett:Teleportation} phenomena between the sets of parties $\{1,2\}$ and $\{3,4,5\}$ take place.

\section{Explicit Examples}
\label{Sec:Examples}
\subsection{Bipartite examples}
\label{Sec:Bipartite}
\subsubsection{Two-qubit Werner state}
Let us now look at some explicit examples of the activation of CHSH-nonlocality in the bipartite scenario. Consider the family of 
two-qubit Werner states~\cite{Werner:1989}:
\begin{equation}
	\tau_{\mbox{\tiny W},2}(p) = p\, \proj{\Psi^-} +
	\left( 1 - p \right) \frac{\id_{\C^2}\otimes\id_{\C^2}}{4},\quad p \in \left[-\frac{1}{3}, 1 \right],
\end{equation}
which are entangled if and only if $p > \frac{1}{3}$. For the following parameter range,
\begin{equation}\label{Eq:p:interval}
	p>p^*=4 \sqrt{2} - 5 \simeq 0.6569
\end{equation}
the CHSH-nonlocality hidden in this family of states can be demonstrated via the
4-qubit ancillary state $\rho$ acting on $[\H_{1}'\otimes\H_{1}'']\otimes[\H_{2}'\otimes\H_{2}'']$
\begin{equation}\label{Eq:ActivatingState}
	\rho = \frac{1}{16} \sum_{i, j =0}^3 R_{i j} \, \sigma_i \otimes \sigma_j \otimes \sigma_i \otimes \sigma_j
\end{equation} 
where $\sigma_0 \equiv \id_2$, $\sigma_1=\sigma_x$, $\sigma_2=\sigma_y$, $\sigma_3=\sigma_z$ are the Pauli matrices, and
\begin{equation}
  \label{eqrho} R= \frac{1}{9} \left(\begin{array}{cccc}
    9 & 3 & 3 & 3\\
    1 & - 1 & 3 & - 1\\
    1 & - 1 & 3 & - 1\\
    1 & - 1 & 3 & - 1
  \end{array}\right).
\end{equation}

To see that $\rho\in\c$, it suffices to note that $\rho$ has positive partial transposition (PPT)~\cite{Peres:PPT}, thus is bound entangled~\cite{BoundEntangled} and hence  cannot even violate CHSH inequality  asymptotically (see result 4 of Ref.~\cite{Masymp}). As for $\tau_{\mbox{\tiny W},2}(p)$, since it is diagonal in the Bell basis, it follows from Ref.~\cite{Horodecki} and Theorem 3 of Ref.~\cite{VW} that $\tau_{\mbox{\tiny W},2}(p)\in\c$ for $p\le\tfrac{1}{\sqrt{2}}$. Indeed, $\tau_{\mbox{\tiny W},2}(p)$ with $p<0.6595$ cannot even violate {\em any} Bell inequality with projective measurements~\cite{Grothendieck}  .
Finally, note that for $\tau=\tau_{\mbox{\tiny W},2}(p)$ and $\rho$ defined in Eq.~\eqref{Eq:ActivatingState}, the left-hand-side of Eq.~\eqref{Eq:cond} becomes
\begin{equation}\label{Eq:Dfn:fp}
	 \tmop{tr}  \left[ \rho \left( \tau_{\mbox{\tiny W},2}(p)\t \otimes H_{\pi / 4} \right) \right]
	 = \frac{1}{12} \left( 3 - \sqrt{2} - \left( 1 + \sqrt{2} \right) p \right)
\end{equation}
which is negative for the interval of $p$ given in Eq.~\eqref{Eq:p:interval}. Hence,  $\rho\otimes\tau_p\not\in\c$, which illustrates the hidden-CHSH-nonlocality of both $\rho$ and $\tau_{\mbox{\tiny W},2}(p)$.

\subsubsection{Higher-dimensional Werner states}

More generally, a $d$-dimensional Werner state can be written as~\cite{Werner:1989}:
\begin{equation}
	\tau_{\mbox{\tiny W},d}(p) = 2p\, \frac{\Pi^-}{d(d-1)} +
	\left( 1 - p \right) \frac{\id_{\C^d}\otimes\id_{\C^d}}{d^2},
\end{equation}
where $\Pi^-=\tfrac{1}{2}(\id_{d^2}-\sum_{i,j=1}^d \ket{i}\!\bra{j}\otimes\ket{j}\!\bra{i})$ is the projector onto the antisymmetric subspace of $\C^d\otimes\C^d$ and $p$ can take any value in the interval $\left[ 1-\frac{2d}{d+1}, 1 \right]$. In Ref.~\cite{Werner:1989}, it was shown that $\tau_{\mbox{\tiny W},d}(p)$ is separable if and only if $p\le p_{\mbox{\tiny sep}}=\tfrac{1}{d+1}$ whereas for $p\le1-\tfrac{1}{d}$,  $\tau_{\mbox{\tiny W},d}(p)$ admits a local model for projective measurements.

As with the two-qubit case, it would be interesting to  identify an ancillary state $\rho\in\c$ that helps to demonstrate the nonlocal behavior of $\tau_{\mbox{\tiny W},d}(p)$. Specifically, with the local filtering protocol specified in Eq.~\eqref{Eq:cond} and if we only consider  $\rho$ that is non-negative under partial transposition, this search for $\rho\in\c$ can be formulated as a semidefinite program~\cite{SDP} (see Appendix~\ref{App:SDP} for details). Numerically, we have solved these semidefinite programs for  small values of $d$ and the critical weight $p^*$ above which such demonstration is possible is summarized in Table~\ref{Tbl:CriticalP}. Note that in all these cases, we found $p^*$ to fall in the interval of $p$ where it is known that $\tau_{\mbox{\tiny W},d}(p)$ is 1-local (under projective measurements), and where the state was not found to violate CHSH inequality even after arbitrary local filtering operation. These results suggest that the activation of $\tau_{\mbox{\tiny W},d}(p)$ and the corresponding $\rho\in\c$ may exist for even higher values of $d$, if not all $d$.

\begin{table}[h!]
	\begin{ruledtabular}
        \begin{tabular}{|c||c|c|c|c|c|}\hline
        $d $ & 2& 3 & 4 & 5 & 6\\ \hline
        $p_{\mbox{\tiny sep}}$ & $0.3333$ & 0.2500 &  $0.2000$  & 0.1667 & 0.1429\\ \hline
        $p^*$ & $0.6569$ & 0.6360 & 0.6247   & $0.6175$  &$0.6126$\\                  \hline	
        $p_{\mbox{\tiny L}}$ & 0.6595\footnote{The tighter bound of $p_{\mbox{\tiny L}}$ quoted here is due to Ref.~\cite{Grothendieck} instead of the original value of 1/2 deduced from Ref.~\cite{Werner:1989}.} & 0.6667 & 0.7500 & 0.8000 & 0.8333 \\ \hline
        $p_{\mbox{\tiny NL,SLO}}$\footnote{Except for $d=2$, the bounds of $p_{\mbox{\tiny NL,SLO}}$ presented here were obtained through numerical optimizations. Incidentally, within the numerical precision of these optimizations, these bounds coincide with the respective threshold value obtained using the local filtering protocol given in Ref.~\cite{SP}, i.e., for $d=3$, 4, 5, and 6, we have respectively $\frac{4}{17}(3\sqrt{2}-1)$, $\frac{3}{7}(2\sqrt{2}-1)$, $\frac{8}{41} (5\sqrt{2}-3)$  and $\frac{5}{14} (3\sqrt{2}-2)$.} & $0.7071$ & 0.7630 & 0.7837   & 0.7944 & 0.8009  \\                  \hline	
	\end{tabular}
        \end{ruledtabular}
	\caption{\label{Tbl:CriticalP} Critical weight $p^*$ above which the nonlocal behavior of the $d$-dimensional Werner state $\tau_{\mbox{\tiny W},d}(p)$ can be demonstrated using the protocol described in Figure~\ref{Fig:Protocol} and the help of ancillary $\rho$ that has positive partial transposition. Also shown are $p_{\mbox{\tiny sep}}$, the maximum value of $p$ whereby $\tau_{\mbox{\tiny W},d}(p)$ is separable, $p_{\mbox{\tiny L}}$, the best lower bound  on the maximal  value of $p$ whereby $\tau_{\mbox{\tiny W},d}(p)$ is known to be 1-local (for projective measurements), and $p_{\mbox{\tiny NL,SLO}}$, the smallest value of $p$ whereby $\tau_{\mbox{\tiny W},d}(p)$ is found (numerically) to violate CHSH inequality after local filtering operation. }
\end{table}

\subsection{Multipartite examples}

Let us now provide  a {\em trivial} illustration on how the multipartite activation can be achieved in the case where $\tau$ is {\em not} biseparable with respect to all bipartitions. Imagine that a $d$-dimensional Werner state is shared between parties $\{2,3\}$, i.e.,
\begin{equation}
	\tau=\tau_{\{2,3\}}=\tau_{\mbox{\tiny W},d}(p),
\end{equation}	
and that a $d$-dimensional maximally entangled state $\ket{\Phi_{d}}=\tfrac{1}{\sqrt{d}}\sum_{s=1}^d\ket{s,s}$ is shared between parties $\{1,3\}$. Clearly, parties $\{1,2\}$ cannot violate  CHSH inequality if they only have access to either $\tau_{\{2,3\}}$ or $\ket{\Phi_d}_{\{1,3\}}$. 

Now if party 3 performs a projection onto $\ket{\Phi_d}$  across the two systems that  has access to  $\proj{\Phi_{d}}_{\{1,3\}}\otimes\tau_{\{2,3\}}$, then,  conditioning on a successful projection, the three parties now share the following state:
\begin{equation}
	\kappa=\tau_{\mbox{\tiny W},d}(p)\o\proj{\Phi_d}_{\{3\}},
\end{equation}
where the state $\tau_{\mbox{\tiny W},d}(p)$ is now shared between parties $\{1,2\}$.  Clearly, for $p>p^*$, we can now proceed with the two-party activation protocol described above (see  Sec.~\ref{Sec:Bipartite}) to demonstrate via the parties $\{1,2\}$ the CHSH-nonlocality hidden in $\tau_{\{2,3\}}$.

With some thought, it is clear that if we allow the dimension of the ancillary state $\rho$ to be arbitrarily large, a  protocol similar to that described above can be applied to teleport~\cite{Bennett:Teleportation} any $n$-partite non-fully-separable state to the first two parties. For $\tau$ that is not biseparable with respect to all bipartitions, the resulting state shared between the first two parties is entangled,
and thus one can complete the multipartite activation protocol by proceeding with the two-party activation protocol (see Figure~\ref{Fig:Protocol}). However, one is reminded from Theorem~\ref{Thm} that a multipartite activation of  hidden-CHSH-nonlocality is possible even if the non-fully-separable multipartite state $\tau$ is biseparable with respect to all bipartitions (eg., when $\tau=\sigma_{\mbox{\scriptsize shifts}})$.

Let us also remark that in the trivial multipartite activation protocol described above, one requires an ancillary state $\rho$ that acts on the Hilbert space $[\C^d\otimes\C^d\otimes\C^2]\otimes[\C^d\otimes\C^2]\otimes[\C^d]$, where the state space $[\C^d]\otimes[\C^1]\otimes[\C^d]$ arises from the resource for teleportation and the state space $[\C^d\otimes\C^2]\otimes[\C^d\otimes\C^2]\otimes[\C^1]$ stems from the two-party ancillary state in $\c$. However, in the protocol that we have adopted for the proof of Theorem~\ref{Thm}, it is clear that an ancillary $\rho$ that acts on $[\C^2]\otimes[\C^d\otimes\C^2]\otimes[\C^d]$ is sufficient. As a concrete example, we note that for $d=2$ and any $p>4\sqrt{2}-5$, the ancillary state $\rho_{\{1,2,3\}}$ acting on $[\H_1'']\otimes[\H_2'\otimes\H_2'']\otimes[\H_3']$,
\begin{equation}
	\rho_{\{1,2,3\}} = \frac{1}{16} \sum_{i, j =0}^3 R_{i j} \, \sigma_j \otimes \sigma_i \otimes \sigma_j \otimes \sigma_i
\end{equation} 
with $R$ defined in Eq.~\eqref{eqrho} can be used to demonstrate the hidden-CHSH-nonlocality of both  $\tau_{\{2,3\}}$ and $\rho_{\{1,2,3\}}$ if we set (Figure~\ref{Fig:Scenario}) 
\begin{equation}
	F_1=\id_{\H_1''},\quad F_2= \ket{\Phi_{\H_{2} \H_{2}'}}\otimes \id_{\H_2''},\quad F_3=\ket{\Phi_{\H_{3} \H_{3}'}}.
\end{equation}

\section{Discussion}\label{Sec:Discussion}

Determining whether a given quantum state can exhibit non-locally-causal correlations is a notoriously difficult problem. Although there exist (heuristic) algorithms for determining if a quantum state $\tau$ can violate any given Bell inequality (see, eg. Ref.~\cite{Liang:2007} and references therein), or a large class of Bell inequalities~\cite{Terhal:2003}, such an approach is not guaranteed to determine with certainty whether $\tau$ can indeed exhibit nonlocal behavior. Note that this question is not only of fundamental interest, but also has practical implication for the usefulness of any given quantum state in the reduction of communication complexity~\cite{QCC} or quantum key distribution~\cite{QKD}.

Going beyond the standard scheme, we show in this paper that {\em all} entangled, i.e., not fully-separable multipartite quantum states are capable of exhibiting CHSH-nonlocality when assisted by an ancillary quantum state which by itself cannot be used to demonstrate CHSH nonlocality. In other words, for each non-fully-separable  state $\tau$, there exists a Bell-type scenario in which $\tau$ cannot be substituted by an arbitrarily large amount of classical correlations (or equivalently, a fully-separable state). In this sense, every entangled state (together with the ancillary state involved in the Theorem) is capable of exhibiting correlations that {\em cannot be simulated} when the spatially separate parties only have access to shared randomness.

Naturally, the non-deterministic preprocessing involved in proving our key result reminds one of the detection loophole discussed in a standard Bell test. An important distinction between the two, as was pointed out by Popescu~\cite{SP}, and also by \.{Z}ukowski {\em et al.}~\cite{Zukowski:Loophole}, is that in the demonstration of hidden nonlocality, this non-deterministic element takes place {\em before} the Bell test. Therefore, a priori, the {\em pre-selection} that results from the non-deterministic process {\em does not} causally depend on the choice of measurements made subsequently. On the contrary, in the case of a detection loophole that arises from inefficient detectors, the {\em post-selection} actually takes place {\em after} the choice of measurements is decided. As a result, a standard Bell test is free of the detection loophole only if the overall detection efficiency is above a certain threshold, whereas in the demonstration of hidden nonlocality, this pre-selection efficiency can be arbitrary low (as long as it is non-zero). For a more rigorous discussion of this distinction, we refer the reader to the detailed discussion presented in Ref.~\cite{Zukowski:Loophole}.

Clearly, the ancillary state $\rho$ employed in the Theorem is only guaranteed to not violate the CHSH Bell inequality but not any other Bell inequalities. Nonetheless, even if $\rho$ violates another Bell inequality, it cannot display, by itself, any nonlocal correlations when tested with the CHSH inequality. Therefore, the violation of $\rho\otimes\tau$ for any entangled but CHSH-preprocessed-1-local $\tau$ still manifests the CHSH-nonlocality {\em hidden} in both states. A natural way to strengthen the current result thus consists of considering only ancillary states which by themselves do not violate {\em any} Bell inequality, a possibility that we shall leave as an open problem.

Finally, since our result applies to all (finite-dimensional) multipartite entangled states, it is also natural to ask if there exists an analog to Theorem~\ref{Thm}, and hence examples of superactivation for genuine multipartite nonlocality~\cite{GMNL}. We conjecture that there are examples of this kind but we shall leave this for future research.

\begin{acknowledgements}
We acknowledge useful discussions with Nicolas Gisin. This work was supported by the  Swiss NCCR-QSIT, the CHIST-ERA DIQIP and CatalunyaCaixa.
\end{acknowledgements}

\appendix

\section{Proofs of Lemmas}

\subsection{Proof of Lemma~\ref{Lem:C}}
\label{App:Lemma2}

\begin{proof}[Proof of Lemma~\ref{Lem:C}]
In Ref.~\cite{MLD}, it was shown that, a two-qubit state $\varrho$ violates CHSH if and only if there exists $U, V\in\rm SU(2)$ and $\theta\in [0,\pi/4]$ such that
\begin{equation}\label{Cqubits}
    \tr\!\left[\varrho\, (U \otimes V)^\dag\, H_\theta\,
    (U \otimes V) \right]
    < 0\ .
\end{equation}
Assume that the $n$-party state $\rho$ violates CHSH between parties $\{1,2\}$ after some SLO on all parties. This implies that there is a separable map $\Omega$ such that the two-party state ${\rm tr}_{3,\cdots. n} [\Omega(\rho)]$ violates CHSH, where ${\rm tr}_{3,\cdots, n}$ stands for the trace over parties $\{3, \ldots, n\}$. Without loss of generality, we can assume that the output of $\Omega$ is a qubit for all but the first two parties.

In Ref.~\cite{Masymp} it was proven that, if a bipartite state violates CHSH then it can be transformed by stochastic local operations into a two-qubit state which also violates CHSH. Therefore, there
must exist a bipartite separable map $\Omega\rq{}$ with a two-qubit output such that
the two-qubit state $\Omega\rq{} ({\rm tr}_{3,\cdots, n} [\Omega(\rho)])$ satisfies condition (\ref{Cqubits}) for some $(U, V, \theta)$. Since ${\rm tr}_{3,\cdots, n}$ commutes with $\Omega\rq{}$, there is a separable map $\Omega\rq{}\rq{}$ with qubit output for all the $n$ parties such that $\Omega\rq{} ({\rm tr}_{3,\cdots, n} [\Omega(\rho)]) = {\rm tr}_{3,\cdots, n} [\Omega\rq{}\rq{} (\rho)]$, and then
\begin{equation}
    \tr\!\left[ \Omega\rq{}\rq{} (\rho) \left( [(U \otimes V)^\dag H_\theta (U \otimes V)] \otimes \id_{\C^2}^{n-2} \right) \right]
    < 0\ .
\end{equation}
The above implies that there is a Kraus operator in $\Omega\rq{}\rq{}$, denoted by ${\bf F} =  F^1 \otimes \cdots \otimes F^n$, such that
\begin{equation}
    \tr\!\left[ {\bf F} \, \rho\, {\bf F}^\dag 
    \left( [(U \otimes V)^\dag H_\theta (U \otimes V)] \otimes 
    \id_{\C^2}^{n-2} \right) \right]     < 0\ .
\end{equation}
Note that $U,V$ can be absorbed into the definition of  $F^1, F^2$ giving
\begin{equation}
    \tr\!\left[ {\bf F} \, \rho\, {\bf F}^\dag \left( H_\theta \otimes \id_{\C^2}^{n-2} \right) \right]
    < 0\ .
\end{equation}
This proves one direction of the lemma, the proof for the other direction is trivial.
\end{proof}

\subsection{Lemma~\ref{Lem:SeparableMap} and its proof}
\label{App:Lemma4}

\begin{lemma}\label{Lem:SeparableMap}
Let $\Omega_x$  be a family of completely positive maps, with input $[\C^2]^{\otimes n}$ and output $\bigotimes_{i=1}^n [\H_{i}\otimes \C^2]$, and being separable with respect to the partition denoted by the brackets.  Let $\mu$ be a unit-trace, positive semi-definite matrix
acting on $\bigotimes_{i=1}^n \H_{i}$ such that
\begin{equation}
\label{principal}
    \mu\t\! \otimes H_{\pi/4}\otimes\id_{\C^2}^{n-2} - \int\!\! dx
    \ \Omega_x\!\left( H_{\theta_x}\otimes\id_{\C^2}^{n-2} \right)
    \geq 0\ ,
\end{equation}
where $H_\theta$ is defined in (\ref{H}), then $\mu$ has to be
fully separable.
\end{lemma}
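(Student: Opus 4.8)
The plan is to show that the operator inequality \eqref{principal} is compatible only with a fully separable $\mu$, organizing everything around one structural fact: for every $\theta\in[0,\pi/4]$ the two-qubit operator $H_\theta$ of \eqref{H} is an entanglement witness. Indeed, for any product vector $\ket{a}\o\ket{b}$,
\[ \bra{a,b}H_\theta\ket{a,b}=1-\cos\theta\,\langle\sigma_x\rangle_a\langle\sigma_x\rangle_b-\sin\theta\,\langle\sigma_z\rangle_a\langle\sigma_z\rangle_b\ge0, \]
since $\cos\theta,\sin\theta\le1$ and, by Cauchy--Schwarz, the subtracted term is bounded by $\sqrt{\langle\sigma_x\rangle_a^2+\langle\sigma_z\rangle_a^2}\,\sqrt{\langle\sigma_x\rangle_b^2+\langle\sigma_z\rangle_b^2}\le1$; yet $H_{\pi/4}$ has a single negative eigenvalue $1-\sqrt2$, attained only on $\ket{\Phi^+}=(\ket{00}+\ket{11})/\sqrt2$. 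Because each $\Omega_x$ is separable, its Kraus operators factorize as $\bigotimes_i A^i$, and conjugating the block-positive operator $H_{\theta_x}\o\id_{\C^2}^{n-2}$ by a product operator preserves positivity on product vectors; hence every $\Omega_x(H_{\theta_x}\o\id_{\C^2}^{n-2})$ is block-positive on $\bigotimes_i[\H_i\o\C^2]$. This makes the right-hand side of \eqref{principal} controllable.

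Next I would extract $\mu$ by contracting the ancillary qubits against the negative direction of $H_{\pi/4}$: sandwich \eqref{principal} with $\ket{w}=\ket{\Phi^+}_{12}\o\bigotimes_{i\ge3}\ket{e_i}$ for unit vectors $\ket{e_i}$. This yields a positive-semidefinite operator on $\bigotimes_i\H_i$, whose first term contributes $\bra{\Phi^+}H_{\pi/4}\ket{\Phi^+}\,\mu\t=(1-\sqrt2)\,\mu\t$, so that
\[ (1-\sqrt2)\,\mu\t\ \ge\ \int\! dx\ \bra{w}\Omega_x(H_{\theta_x}\o\id_{\C^2}^{n-2})\ket{w}. \]
Since $1-\sqrt2<0$, this feeds the negativity of the witness $H_{\pi/4}$ into a two-sided constraint on $\mu\t$, equivalently on $\mu$, as full separability is transpose-invariant.

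The main obstacle is now visible. The negative direction $\ket{\Phi^+}_{12}$ is \emph{entangled} across parties $1$ and $2$, so the contraction destroys the product structure of $\Omega_x$, and the contracted operators $\bra{w}\Omega_x(\cdots)\ket{w}$ need not be separable; one cannot read off separability of $\mu$ directly. That this difficulty is intrinsic is seen as follows: the weaker statement that $\mu\t\o H_{\pi/4}\o\id_{\C^2}^{n-2}$ is \emph{merely} block-positive does not force $\mu$ separable, because a bound-entangled $\mu$---which by result~4 of Ref.~\cite{Masymp} can never be filtered into a CHSH violation---renders it block-positive as well. Separability must therefore be squeezed out of the full positive-semidefinite content of \eqref{principal}, i.e.\ from the operator slack $R:=\mu\t\o H_{\pi/4}\o\id_{\C^2}^{n-2}-\int dx\,\Omega_x(H_{\theta_x}\o\id_{\C^2}^{n-2})\ge0$, and not from its values on product vectors alone.

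To close, I would split each witness as $H_{\theta_x}=(1-\cos\theta_x-\sin\theta_x)\proj{\Phi^+}+P_{\theta_x}$ with $P_{\theta_x}\ge0$, so that in the contracted inequality the indefinite contribution is confined to the rank-one term $\propto\proj{\Phi^+}$, while $P_{\theta_x}$ yields only positive-semidefinite, manifestly separable pieces. The heart of the proof is then to show that the entangling defect carried by $\proj{\Phi^+}$ is exactly compensated against this positive part, pinning $\mu\t$ between separable operators and forcing a decomposition $\mu=\sum_j\bigotimes_i\rho_i^j$. Equivalently, one argues by duality (Farkas, as already invoked for \eqref{pl}): assuming $\mu$ entangled, a detecting witness combined with $H_{\pi/4}$ is used to build a test against which $R\ge0$ and the block-positivity of the first step are contradictory. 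I expect this compensation step---available only because one controls the entire operator inequality rather than its restriction to product states---to be essentially all of the technical labor, exactly as in the two-party argument of Ref.~\cite{MLD}; the contraction and block-positivity reductions are routine by comparison.
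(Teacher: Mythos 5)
Your preliminary observations are sound: $H_\theta$ is indeed block-positive with a single negative eigenvalue $1-\sqrt2$ on $\ket{\Phi^+}$, separable maps do preserve block-positivity, and you correctly diagnose that block-positivity of $\mu\t\otimes H_{\pi/4}\otimes\id_{\C^2}^{n-2}$ alone cannot force separability (the bound-entangled counterexample is a good sanity check). But the proof stops exactly where the lemma begins: the ``compensation step'' you defer to at the end is the entire content of the argument, and your sketch of it --- ``pinning $\mu\t$ between separable operators'' via the split $H_{\theta_x}=(1-\cos\theta_x-\sin\theta_x)\proj{\Phi^+}+P_{\theta_x}$ --- is a restatement of the goal, not a mechanism. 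Moreover, your single contraction against the negative direction $\ket{\Phi^+}_{12}$ discards the information that actually closes the proof.

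Concretely, the paper's argument keeps track of the \emph{full} Bell-sector structure on both the input and output ancilla qubits: it forms the sixteen positive operators $\omega_x^{rs}$ obtained by sandwiching $\Omega_x$ between input projector $\Pi_s$ and output projector $\Pi_r$, and reduces \eqref{principal} to the four operator inequalities \eqref{uk} and then, by tracing, to the vector inequality \eqref{io} involving the $4\times4$ matrices $M_x^{rs}=\tr[\omega_x^{rs}]$. Three nontrivial inputs then do the work: (i) the cone $\mathcal{M}$ of achievable $M$ is characterized (via Ref.~\cite{DLM}) and shown to be the \emph{same for all} $n\geq 2$, which is the genuinely new reduction here; (ii) the rigidity result of Ref.~\cite{MLD} forces $\theta_x=\pi/4$ and $\int dx\,M_x=M_0$ with $M_0\cdot N_{\pi/4}=N_{\pi/4}$, so the positive semidefinite operators in \eqref{uk} are traceless and hence \emph{vanish}, turning the inequalities into equalities \eqref{fin} and forcing $\omega_0^{rs}=0$ wherever $M_0^{rs}=0$; (iii) the separability of $\mu$ is then read off not from the negative sector $\Phi^+$ at all, but from the rows $r=2,3$: adding those equalities exhibits $2\mu\t$ as the output of a separable map applied to the separable state $\Psi=\Pi_2+\Pi_3$. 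None of (i)--(iii) appears in your proposal, and your focus on the negative eigenvector points away from step (iii). To complete the proof you would need to reconstruct essentially this entire chain, or find a genuine substitute for it.
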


\begin{proof}
First, let us characterize the solutions
$\Omega_x$ of (\ref{principal}). The Bell basis is defined as
\begin{eqnarray}\label{bell}
  \ket{\Phi_{^1_2}} &=& 2^{-1/2}\left( \ket{0,0} \pm \ket{1,1} \right)\ , \\
  \ket{\Phi_{^3_4}} &=& 2^{-1/2}\left( \ket{0,1} \pm \ket{1,0} \right)\ .
\end{eqnarray}
The matrices $H_\theta$ are diagonal in this basis, $H_\theta=
\sum_{r=1}^4 N_\theta^r\, \Pi_r$, where $\Pi_r \equiv\proj{\Phi_r}$
are the Bell projectors and $N_\theta^i$ are the components of the vector
\begin{equation}\label{N}
    N_\theta = \left[
\begin{array}{c}
  1-\cos\theta-\sin\theta \\
  1+\cos\theta-\sin\theta \\
  1-\cos\theta+\sin\theta \\
  1+\cos\theta+\sin\theta \\
\end{array}
\right].
\end{equation}
For each value of $x$ consider the sixteen matrices acting on the space $\bigotimes_{i=1}^n \H_i$ given by
\begin{equation}\label{omega}
    \omega_x^{rs} \equiv  \frac{1}{2^{n-2}} 
    \tr_{\otimes_{i=1}^n \H_i''}\!\!
    \left[\left(\id_{\H}\o\Pi_r\otimes\id_{\C^2}^{n-2}\right)\,
    \Omega_x(\Pi_s\otimes\id_{\C^2}^{n-2})\right]\ ,
\end{equation}
for $r,s=1,\ldots, 4$, where the identity matrix $\id_{\H}$ acts on $\bigotimes_{i=1}^n \H_i$, and the Bell projectors $\Pi_r$ act on
$\H_{1}''\otimes\H_{2}''$. Each $\omega_x^{rs}$ is the result of a
physical operation, and hence positive---although not necessarily normalized. One can see $\omega_x^{rs}$ as
the Choi-Jamio{\l}kowski state corresponding to the map $\Omega_x$, after
``twirling" the input and output subsystems 1 and 2 with respect to the group of
unitaries that leaves Bell-diagonal states invariant. Multiplying the left-hand-side of \eqref{principal} by $\id_\H \otimes \Pi_i \otimes \id_{\C^2}^{n-2}$ and taking the trace over $\otimes_{i=1}^n \H_i''$, we get
\begin{equation}\label{uk}
    \mu\t N_{\pi/4}^r - \int\!\! dx
    \sum_{s=1}^4\ \omega_x^{rs} N_{\theta\!_x}^s \geq 0
    \ ,
\end{equation}
for $r=1,\ldots 4$. Denote by $M_x$
the $4\times4$ matrix with components $M_x^{rs}=\tr[\omega_x^{rs}]$.
Performing the trace on the left-hand-side of (\ref{uk}) we obtain the
four inequalities
\begin{equation}\label{io}
    N_{\pi/4} - \int\!\! dx
    \ M_x \cdot N_{\theta\!_x} \succeq \nv\ ,
\end{equation}
where $\nv$ is the 4-dimensional null vector, and the symbols $\cdot$
and $\succeq$ mean, respectively, standard matrix multiplication and
component-wise inequality. 

In what follows we consider the set of $4\times 4$ matrices $M$ obtained by taking any separable map $\Omega$ and projecting it as
\begin{equation}
	M^{rs} =  \frac{1}{2^{n-2}} \tr
    \left[\left(\id_{\H}\o\Pi_r\otimes\id_{\C^2}^{n-2}\right)
    \Omega (\Pi_s\otimes\id_{\C^2}^{n-2})\right]\ .
\end{equation}
This defines a linear transformation mapping any $\Omega$ to an $M$ matrix. Since the set of separable maps~(\ref{sepmap}) is a convex cone, the set of matrices $M$, denoted ${\cal M}$ is a convex cone too. Any separable map is a positive linear combination of maps with a single Kraus operator $\Omega(\rho)= {\bf F}\, \rho\, {\bf F}^\dag$, where ${\bf F}= \left( \bigotimes_{i=1}^n F_{i} \right)$. In this case
\begin{align*}
	M^{rs} &=  \frac{1}{2^{n-2}}\, \tr\!
    \left[\left(\id_{\H}\o\Pi_r\otimes\id_{\C^2}^{n-2}\right)
	{\bf F} (\Pi_s\otimes\id_{\C^2}^{n-2})
    {\bf F}^\dag \right]
\\	&=  \nu\, \tr\!
    \left[\left(\id_{\H_1} \o \id_{\H_2} \o\Pi_r\right)\,
	(F_1 \otimes F_2) \Pi_s (F_1 \otimes F_2)^\dag \right]
\end{align*}
where
\begin{equation*}
	\nu =  \frac{1}{2^{n-2}} \prod_{i=3}^n \tr [ F_i F_i^\dag ]\ .
\end{equation*}
The above equation shows that the convex cone ${\cal M}$ for any $n \geq 2$ is identical to the one for $n=2$. The characterization of ${\cal M}$ for the $n=2$ case was obtained in Ref.~\cite{DLM}, and goes as follows. Denote by $\mathcal{D}$ the set of $4\times 4$ doubly-stochastic matrices, that
is, the convex hull of the permutation matrices
\cite{Bhat}. Denote by $\mathcal{G}$ the convex hull of all matrices
obtained when independently permuting the rows and columns of
\begin{equation}\label{e2}
G_0\equiv \left[
\begin{array}{cccc}
  1 & 1 & 0 & 0 \\
  1 & 1 & 0 & 0 \\
  0 & 0 & 0 & 0 \\
  0 & 0 & 0 & 0 \\
\end{array}
\right]\ .
\end{equation}
It was shown in Ref.~\cite{DLM} that, any matrix $M$ as defined above, in the $n=2$  case, can be
written as
\begin{equation}\label{m}
	{\cal M}= \{pD+qG\, |\, p,q \geq 0, D\in \mathcal{D}, G\in \mathcal{G} \}\ .
\end{equation}
Now we know that equation~(\ref{io}) is independent of $n$. It is shown in Ref.~\cite{MLD} that any solution of~(\ref{io}) satisfies $\theta_x = \pi/4$ for all $x$, and $\int dx\, M_x =M_0$ where
\begin{equation}\label{M_0} M_0 = \left[
\begin{array}{cccc}
  1 & 0 & 0 & 0 \\
  0 & 1- \eta & \eta & 0 \\
  0 & \eta & 1- \eta & 0 \\
  0 & 0 & 0 & 1 \\
\end{array}
\right]\ ,
\end{equation}
for some $\eta\in [0,1]$. Note that $M_0 \cdot N_{\pi/4} =N_{\pi/4}$, hence
the left-hand-side of~(\ref{io}) is zero. This implies that the left-hand-side of (\ref{uk}) is
traceless for all $i$ and therefore
\begin{equation}\label{fin}
    \mu\t N_{\pi/4}^r =
    \sum_{s=1}^4\, \omega_0^{rs} N_{\pi/4}^s,\quad r=1,\ldots, 4,
\end{equation}
where $\omega_0$ is any $\omega_x$ that gives rise to $M_0$. Using the
same argument, the pairs $(r,s)$ for which $M_0^{rs}=0$ are such that
$\omega_0^{rs}=0$. Therefore, by adding the equalities in (\ref{fin})
corresponding to $r=2,3$, and using the definition of $\omega_0^{rs}$
in \eqref{omega}, we obtain
\begin{equation}\label{fin2}
    2\, \mu\t =  \frac{1}{2^{n-2}}\tr_{\otimes_{i=1}^n \H_i''}\!\left[\left(\id_{\H}\o\Psi\otimes\id_{\C^2}^{n-2}\right)
    \Omega_0(\Psi\otimes\id_{\C^2}^{n-2})\right]\ ,
\end{equation}
where $\Psi=\Pi_2+ \Pi_3$, and $\Omega_0$ is any $\Omega_x$ that
gives rise to $\omega_0$. Using the Peres-Horodecki separability criterion \cite{Peres:PPT, Hppt} one
can check that the (unnormalized) two-qubit state $\Psi$ is a
separable state. Equation (\ref{fin2}) implies that $\mu\t$ is the
output of a separable map applied to a {\em fully-separable} input state, and
hence is a {\em fully-separable state} as we wanted to prove. 
\end{proof}

\section{Formulating the search of an activating ancillary state $\rho\in\c$ via semidefinite programs}
\label{App:SDP}

For any given entangled state $\tau$, we describe below a semidefinite program that can be used to construct, whenever possible, an entangled state $\rho$ that has PPT with respect to party 1 such that $\rho\,\otimes\,\tau\not\in\c$. We start by noting that with the local filtering operations specified in Eq.~\eqref{Eq:F_i}, the analogous expression for the left-hand-side of Eq.~\eqref{C} for $\theta=\pi/4$ becomes the left-hand-side of Eq.~\eqref{Eq:cond}. If there exists a PPT state $\rho$ such that the left-hand-side of Eq.~\eqref{Eq:cond} is less than zero, then we would have identified a PPT state which exhibits the nonlocal behavior of $\rho\otimes\tau$ via the CHSH inequality. 

Specifically, note that if the optimum value of the following optimization
\begin{eqnarray}
&{\rm minimize}_{\{\rho\}}&\tr\!\left[\rho\, (\tau\t\otimes H_{\pi/4}\otimes\id_{\C^2}^{n-2})\right] \nonumber\\
&{\rm subject \ to} \quad & \quad\rho\ge 0,\quad \rho^{\mbox{\tiny T$_1$}}\ge 0. \label{Eq:SDP}
\end{eqnarray}
is negative (where $\rho^{\mbox{\tiny T$_{1}$}}$ is the partial transposition of $\rho$ with respect to $\H_1$), then $\rho$ is guaranteed to be a PPT state such that $\rho\,\otimes\,\tau$ violates the CHSH inequality. Moreover, if $\tau\in\c$, then a negative value for the above optimization problem also implies that $\rho$ must be entangled (since $\rho\,\otimes\,\tau$ is necessarily in $\c$ if $\rho$ is separable and $\tau\in\c$). Finally, note that the optimization problem~\eqref{Eq:SDP} is a semidefinite program as it involves an optimization over positive semidefinite matrices ($\rho$, in this case) which are only subjected to linear matrix inequality constraints~\cite{SDP}.

\end{document}